\newcommand{\cC}{\mathcal{C}}
\newcommand{\cD}{\mathcal{D}}
\newcommand{\cE}{\mathcal{E}}
\newcommand{\cG}{\mathcal{G}}
\newcommand{\cH}{\mathcal{H}}
\newcommand{\cQ}{\mathcal{Q}}
\newcommand{\cP}{\mathcal{P}}
\newcommand{\cR}{\mathcal{R}}
\newcommand{\cU}{\mathcal{U}}
\newcommand{\sB}{\mathscr{B}}
\newcommand{\sS}{\mathscr{S}}
\newcommand{\ket}[1]{{\left|{#1}\right\rangle}}
\newcommand{\bra}[1]{{\left\langle{#1}\right|}}
\newcommand{\tr}{\mathrm{Tr}}
\newcommand{\trb}{\mathrm{Tr}_\mathrm{B}}
\newcommand{\upe}{\mathrm{e}}
\newcommand{\upd}{\mathrm{d}}
\newcommand{\id}{\mathbbm{1}}
\newcommand{\idB}{\id_{\mathrm{B}}}
\newcommand{\idS}{\id_{\mathrm{S}}}
\newcommand{\dB}{d_\mathrm{B}}
\newcommand{\Pc}{P_\cC}
\newcommand{\Pe}{\cP_\cE}
\newcommand{\Rs}{\cR_\mathrm{S}}
\newcommand{\rhoS}{\rho_\mathrm{S}}
\newcommand{\HS}{\cH_\mathrm{S}}
\newcommand{\HB}{\cH_\mathrm{B}}
\newcommand{\HSB}{H_\mathrm{SB}}
\newcommand{\Et}{\widetilde{E}}
\newcommand{\Pt}{P_{\widetilde\cE,\rho}}
\newcommand{\avg}[1]{{\left\langle#1\right\rangle}}
\newcommand{\ddA}{\delta^2\!\!A}
\newcommand{\dA}{\delta\!A}
\newtheorem{theorem}{Theorem}
\newtheorem{corollary}[theorem]{Corollary}
\newtheorem{lemma}[theorem]{Lemma}
\begin{document}

\title{Open-System Quantum Error Correction}

\author{Yink Loong Len}
\affiliation{Centre for Quantum Technologies, National University of
  Singapore, 3 Science Drive 2, Singapore 117543, Singapore} 

\author{Hui Khoon Ng}
  \affiliation{Yale-NUS College, 16 College Avenue West, Singapore 138527,
  Singapore} 
\affiliation{Centre for Quantum Technologies, National University of
  Singapore, 3 Science Drive 2, Singapore 117543, Singapore} 
\affiliation{MajuLab, CNRS-UCA-SU-NUS-NTU International Joint Research Unit, Singapore.} 
	
\date{\today}

\begin{abstract}
We study the performance of quantum error correction (QEC) on a system undergoing open-system (OS) dynamics. The noise on the system originates from a joint quantum channel on the system-bath composite, a framework that includes and interpolates between the commonly used system-only quantum noise channel model and the system-bath Hamiltonian noise model. We derive the perfect OSQEC conditions, with QEC recovery only on the system and not the inaccessible bath. When the noise is only approximately correctable, the generic case of interest, we quantify the performance of OSQEC using worst-case fidelity. We find that the leading deviation from unit fidelity after recovery is quadratic in the uncorrectable part, a result reminiscent of past work on approximate QEC for system-only noise, although the approach here requires the use of different techniques than in past work.
\end{abstract}


\begin{widetext}
\maketitle    
\end{widetext}

\section{Introduction}

To successfully implement quantum information processing (QIP) tasks, the adversarial effects of noise on the quantum system must mitigated. Quantum error correction (QEC) is a general method for active removal of noise (see, for example, Ref.~\cite{Lidar+Brun:13}). One encodes and stores information in a part of the system Hilbert space---the code space---chosen based on the nature of the noise so that an operation can be applied to recover, with a high probability of success, the stored information despite errors caused by the noise. 

The choice of an appropriate code space and an assessment of its efficacy requires knowledge of the noise. The standard description of noise, for the purpose of QIP, falls into two main types: the system-only quantum channel model and the system-bath Hamiltonian noise model. In the quantum channel model, the noise acts as a completely-positive (CP), trace-preserving (TP) map on the system. In the Hamiltonian noise model, unitary dynamics occur for the joint system-bath composite generated by a microscopic Hamiltonian, and the resulting system state is obtained by tracing out the bath degrees of freedom. 

Much of the work on QEC is based on the quantum channel model. For instance, the conditions for existence of QEC codes are phrased in terms of the Kraus operators of the quantum noise channel acting on the system \cite{Knill+Laflamme:97,Kribs+et.al.:05, Kribs+et.al.:06,HK+Prabha:10, HK+Prabha:12}. The bulk of the literature on fault-tolerant quantum computation also deals only with the quantum channel model (see, for example, Refs. \cite{Knill+et.al.:98, Aharonov+Ben-Or:08, Aliferis+et.al:06, Stephens:13, Fowler:12, Raussendorf+Harrington:07, Criger+Terhal:16}). The Hamiltonian noise model is usually employed when discussing non-Markovian dynamics of the system arising from the joint evolution with the bath. Examples include the study of continuous-time QEC \cite{Oreshkov+Brun:07}, the analysis of fault tolerance in the non-Markovian situation \cite{Terhal+Burkard:05, Aliferis+et.al:06, Aharonov+et.al:06, Ng+Preskill:09, Novais+et.al:08(1)}, and the study of entanglement revival or information backflow in a system coupled to a bath (see the review article \cite{Vega+Alonso:17} and references therein).

The quantum channel description is usually assumed when the noise on the system alone is directly characterized, for example, through the use of process tomography. The Hamiltonian noise model, as it explicitly involves the (inaccessible) bath, is used when one has a good understanding of the underlying physical processes that govern the system-bath dynamics.
The two noise models are not unrelated, of course. In many physical scenarios, the quantum channel model arises as an approximation of the Hamiltonian noise model \cite{Bruer+Petruccione:02}: When the Born-Markov approximation is appropriate, a (system-only) Lindblad master equation well describes the dynamics of the system, and the quantum channel model is but a finite time-step integration of the continuous-time Lindblad equation.

The quantum channel model is easier to work with as it refers only to the system. It is, however, inadequate in capturing non-Markovian features observable in experiments today \cite{Bernades+et.al:16, Souza+et.al:13, Chiuri+et.al:12, Xu+et.al:10, Liu+et.al:11}, and much studied in theory, often using the Hamiltonian noise model (see review articles, Refs.~\cite{Rivas+et.al:14} and \cite{Breuer+et.al:16}). Non-Markovian features in the noise can have significant consequences on the study of QEC. In fault-tolerant QEC, for example, non-Markovian noise requires, firstly, new analysis techniques in dealing with such a noise model \cite{Terhal+Burkard:05, Novais+et.al:08(1)}, and, secondly, leads to considerably worse noise threshold estimates \cite{Terhal+Burkard:05, Aliferis+et.al:06, Aharonov+et.al:06, Ng+Preskill:09}. One can expect bounds on the performance of QEC for non-Markovian noise to involve quantities that scale with the size of the bath. Intuitively, coupling to a larger bath means more ways for errors on the system to occur. However, in practice, the dependence on bath size often arises, not because the effect of the noise has some inherent scaling with the bath size, but due to proof techniques, e.g., bounding the norm of a system-bath operator appearing in the noise description. Such bounds can become unhelpful when the noise model involves a large bath, as is usually the case for the unitary dynamics of the Hamiltonian noise model to be a good physical description.

In this work, we consider a noise model that has the best of both worlds. Here, the noise is described as a \emph{joint} CPTP map (or quantum channel) acting on the system and a portion of the bath---we refer to it as the \emph{small bath}---sufficiently closely coupled to the system to have a significant non-Markovian effect on it. This mimics the typical physical situation, where the system is more closely coupled to only a few bath degrees of freedom (e.g., due to spatial proximity) and it is only these degrees of freedom that play a role of the ``memory" for the system, leading to non-Markovian effects. The bulk of the bath degrees of freedom provides only a large, dissipative, no-memory, bath that can be used to justify the Markovian approximation on the system and small-bath. Such a noise model includes and smoothly interpolates between the two standard noise models: The quantum channel model is one where the small bath is trivial, so that the ``joint" channel acts only on the system; the Hamiltonian noise model is one where the joint CPTP channel is a unitary map, when the full bath is the ``small" bath.

We want to quantify the performance of QEC under such a joint system--small-bath noise model. The information is encoded in the system, subjected to the joint system--small-bath noise, and then a recovery operation is applied. The recovery operation for error correction is applied only on the system, since the bath, by its very definition, is inaccessible and uncontrollable. We refer to this situation of QEC as ``open-system QEC" (OSQEC). That QEC works in the perfect case is perhaps not surprising---one might argue that this follows by linearity of the standard QEC conditions; in Sec.~\ref{sec:PerfQEC}, we make this precise by deriving the extension of standard QEC conditions to the case of OSQEC, and point out how it differs from what is known as \emph{operator} QEC (OQEC) \cite{Kribs+et.al.:05, Kribs+et.al.:06}. What is less obvious is how the loss in fidelity in the case of \emph{approximate} QEC (AQEC)---the practically relevant situation---depends on the open-system noise properties. In this sense, our work extends previous work on AQEC of Refs.~\cite{HK+Prabha:10, HK+Prabha:12} to this system--small-bath noise model. Section~\ref{sec:AQEC} looks at the performance of the code when the noise is approximately correctable. Before we begin, however, we first lay out the details of our noise model (Sec.~\ref{sec:NoiseModel}), and then define the basic notions of QEC, and, in particular, the new situation of OSQEC (Sec.~\ref{sec:Code}). We conclude in Sec.~\ref{sec:Conc}.

\section{The noise model}\label{sec:NoiseModel}
A physical system S, with Hilbert space denoted as $\HS$, sits within a bath B, with Hilbert space $\HB$. The system could be one intended for use in quantum information processing, and is fully controllable. In contrast, the bath is, by definition, uncontrollable and one cannot directly access its microscopic degrees of freedom. The full system-bath Hilbert space is denoted as $\cH\equiv\HS\otimes\HB$.

Noise on the system arises from interaction with the bath. We describe this noise by a 
\emph{quantum channel} acting jointly on the system and bath, i.e., a linear, completely positive (CP) and trace-preserving (TP) map $\cE:\sB(\cH)\longrightarrow\sB(\cH)$, where $\sB(\mathcal{V})$ is the set of bounded operators on a vector space $\mathcal{V}$. $\cE$ can be specified by giving a  (nonunique) set of Kraus operators $\{E_a\}_{a=1}^N$ so that $\cE$ acts as $\cE(\,\cdot\,)=\sum_{a=1}^N E_a(\,\cdot\,) E_a^\dagger$, a structure that assures the CP property. $\cE$ is TP if $\sum_{a=1}^N E_a^\dagger E_a=\id$. We write $\cE\sim\{E_a\}$ to indicate the quantum channel and an associated set of Kraus operators.

Our description of the noise as a joint system-bath CPTP map arises naturally in many physical scenarios. In a typical experiment, the system of interest is weakly coupled to the environment---the system-environment split is useful only if their mutual coupling is weak. However, invariably, there are a few environmental degrees of freedom more strongly coupled to the system than the rest of the environment, e.g., due to spatial proximity or similarity in characteristic frequencies. These more closely coupled bath degrees of freedom are responsible for non-Markovian noise or memory effects on the system. The rest of the environment forms merely an information ``sink": Any information that flows from the system into it never returns, at least not on time scales relevant for any experiment on the system. We refer to the few more-closely-coupled environmental degrees of freedom as the ``small bath", and the rest of the environment as the large bath. This identification of a large, dissipative bath permits the application of the Born-Markov approximation, where the system and small-bath undergo joint (continuous-time) dynamics according to a Lindblad master equation. Written for discrete time-steps (e.g., the time between system gate operations, or consecutive QEC cycles), the Lindbladian evolution translates into a CPTP description for propagating the system and small-bath state from one time-step to the next. The bath B of the previous paragraphs of this section is then the small bath. A single spin (system) interacting with its immediate neighbouring spins (small bath), which are themselves coupled to a larger circle of spins (large environment), or an optomechanical oscillator (system) in a lossy cavity (the near-resonant cavity modes are the small bath; the outside electromagnetic field modes form the large bath), are some examples where such a noise model would be appropriate \cite{Maze+et.al:08, Blatt+et.al:15, Muller+et.al:15, Lisenfeld+et.al:15}. 

The system-bath CPTP map description captures much more interesting and complicated dynamics of the system than the standard system-only quantum channel model. The action of $\cE$ on the system, except in special circumstances, cannot be described by a system-only CPTP map. This is the problem of (the lack of) a ``reduced system-only description" discussed in the 1990s \cite{Pechukas:94, Alicki95, Pechukas95} (if not earlier). It is known that a CPTP system-only description exists if the input system-bath state has zero discord \cite{Rodriguez+et.al:08}. More generally, one cannot even define a sensible \emph{map} on the system alone that captures the action of $\cE$. That no such system-only map exists is obvious: Two system-bath states $\rho$ and $\sigma$ satisfying $\trb\{\rho\}=\trb\{\sigma\}$, i.e., the same system-only state, will generally map to states $\rho'$ and $\sigma'$ under the action of $\cE$ such that $\trb\{\rho'\}\neq \trb\{\sigma'\}$, so one cannot possibly define a system-only map that takes in a system-only state and reproduces the correct post-$\cE$ system state \cite{note1}. Put another way, what happens to the system state under $\cE$ depends on the associated bath state, and one cannot predict the output system-only state without knowledge of the bath state and its correlations with the system. Of course, if $\cE$ has a special structure, e.g., $\cE=\cE_\mathrm{S}\otimes\cE_\mathrm{B}$, so that its action on the system and the bath are independent, one might be able to define a system-only map; generally, this would not be possible. In this sense, our system-bath CPTP $\cE$ is a \emph{bona fide} generalization of the usual system-only CPTP noise description. 

A pertinent question is how one might come by such a description $\cE$ of the system-bath joint map. One cannot directly measure the bath and so standard process tomography methods do not apply. In addition, one also needs to know the system-bath state at some (usually initial) time, to know what happens to the system at a later time upon application of the system-bath $\cE$. Again, one cannot perform \emph{state} tomography to learn that system-bath state. Such questions are not unique to our work but arise in every open-system discussion where the system is explicitly coupled to a bath. Usually, one declares that one knows the physics of the underlying interaction and can hence write down a system-bath Hamiltonian $\HSB$ for the dynamical evolution. One can take a step further and analyze the coupling strengths to arrive at an appropriate split into the small and large baths, and thereby derive from $\HSB$, a joint system-(small-)bath noise map $\cE$. There are some recent developments \cite{Gessner+Breuer:11, Gessner+Breuer:13, Gessner+et.al:14, Modi12, Ringbauer+et.al15, Dai+et.al:16, Paz-Sliva+Viola:14, Norris+et.al:16} that indicate the possibility of obtaining information about the system-bath interaction, initial correlations or initial state from monitoring only the system, at least for the parts of $\cE$ that have an effect on the system. However, much more work needs to be done before we can talk about process-tomography-type characterization of a system-bath $\cE$. In the following, we will simply assume we are given a description of $\cE$, and leave as an open question---important for all work discussing system-bath interactions---how one goes about directly measuring and identifying this $\cE$.

\section{Code preliminaries}\label{sec:Code}
We encode a qudit of information in a system subspace $\cC\subseteq\HS$ of dimension $d ~[\leq \dim(\HS)]$. The full system-bath Hilbert space has the structure
\begin{equation}
\label{eq:H}
\cH=\HS\otimes\HB=(\cC\oplus\cC^\perp)\otimes\HB\equiv \cP\oplus \cP^{\perp},
\end{equation}
where $\cP\equiv \cC\otimes \HB$. We denote the projector onto $\cC$ by $\Pc$; the projector onto $\cP$ is then $P=\Pc\otimes \idB$. The dimension of the bath is written as $d_\mathrm{B}$.
The encoded information is stored in $\cC$ only, i.e., as a system state (density operator) $\rhoS$ supported on $\cC$. We denote the set of states supported on a subspace $\cU$  by $\sS(\cU)$. The set of (system-only) code states is hence $\sS(\cC)$. Viewed as a system-bath state, a code state $\rhoS\in\sS(\cC)$ is any state $\rho\in\sS(\cH)$ such that $\rhoS=\trb\{\rho\}$, the partial-trace over the bath. All such states $\rho$ satisfy $\rho=P\rho P$. The extension to a system-bath state $\rho$ is nonunique, but they all carry the same encoded information $\rhoS$. We refer to the system-only $\cC$ and its extension to the bath $\cC\otimes\HB$ both as the code space, and $\sS(\cC)$ and $\sS(\cC\otimes\HB)$ as the set of code states. Whether we mean the system-only or system-bath version will be clear from the context. Note that the system-bath code states can generally be entangled across the system-bath cut.

We consider the action of $\cE$ on states initialized in the code space, $\rho\in\sS(\cC\otimes\HB)$. The range of $\cE$ for inputs from the code space is denoted by $\Pe$; equivalently, $\Pe$ is the support of $\cE(P)$. $P_\cE$ denotes the projector onto $\Pe$.

Our goal here is to examine how well the encoded information can be protected from noise with the aid of a recovery operation. The recovery operation is a linear CPTP map $\cR$ that takes a state in $\Pe$ back into the code space, i.e., $\cR:\sB(\Pe)\longrightarrow \sB(\cC\otimes\HB)$. It is performed after every application of the noise $\cE$ on a code state, to mitigate its effects. As the bath is inaccessible, the recovery can act nontrivially only on the system, i.e. $\cR$ satisfies a structural constraint,
\begin{equation}\label{StrConstr}
\cR=\cR_\mathrm{S}\otimes \idB.
\end{equation}
All recovery maps considered here are assumed to satisfy this condition. That $\cR$ maps all states back into the code space then amounts to the condition that $\Rs$ satisfies $\Rs=\cP_\cC\circ\Rs$, where $\cP_\cC(\,\cdot\,)\equiv\Pc(\,\cdot\,)\Pc$ is the projection map onto $\cC$.

For an initial code state $\rho$, the state after the noise and followed by the recovery is $(\cR\circ\cE)(\rho)$. We say that the noise $\cE$ is \emph{approximately correctable} on code $\mathcal{C}$ if there is a CPTP recovery $\cR$ such that 
\begin{equation}
\trb\{(\cR\circ\cE)(\rho)\}\simeq \trb\{\rho\}, \quad \forall \rho\in \sS(\cC\otimes\HB).
\end{equation}
The recovery is assumed to map every state back into the code space, so that $\cR(\tau)=P\big(\cR(\tau)\big)P$, at least for all $\tau\in\sS(\Pe)$.
Given code $\cC$ and noise $\cE$, we quantify the quality of correction for a recovery $\cR$ by the fidelity between the initial state $\rho$ and the recovered state $\sigma\equiv(\cR\circ\cE)(\rho)$: $F(\rho,\sigma)\equiv\tr\sqrt{\rho^{1/2}\sigma\rho^{1/2}}$. The fidelity loss---the deviation of the square of the fidelity from 1--- for recovery $\cR$ on state $\rho$, is
\begin{equation}
\eta_\cR(\rho)\equiv 1-F^2\big(\trb\{\rho\}, \trb\{(\cR\circ\cE)(\rho)\}\big),
\end{equation}
and we quantify the performance of error correction by the \emph{worst-case fidelity loss} over all code states, i.e.,
\begin{equation}
\eta_\cR(\cC) \equiv \max_{\rho\in  \sS(\cC\otimes\HB)} \eta_\cR(\rho).
\end{equation}
The concavity of the fidelity measure \cite{Nielsen+Chuang:10} means that maximal $\eta_\cR$ is attained on a pure \emph{system-bath} state, i.e., the maximization above over the code states $\rho\in\sS(\cC\otimes\HB)$ can be restricted to pure states $\rho$ only.

The \emph{optimal recovery}, $\cR_\mathrm{op}$, is the one with the smallest fidelity loss among all possible recovery maps. Its fidelity loss is denoted as $\eta_\mathrm{op}(\cC)$,
\begin{equation}
\eta_\mathrm{op}(\cC) \equiv \eta_{\cR_\mathrm{op}}(\cC) =\min_{\textrm{CPTP }\cR}\eta_\cR(\cC).
\end{equation}
Following Refs.~\cite{HK+Prabha:10, HK+Prabha:12}, we say that $\cE$ is $\epsilon$-correctable on code $\cC$ if $\epsilon\geq \eta_\mathrm{op}(\cC) $.
$\cE$ is perfectly correctable for code $\cC$ if $\epsilon =0$, i.e., $\eta_\mathrm{op}(\cC)=0$.  
The minimization over CPTP $\cR$ above is understood to be over those $\cR$ satisfying the structural constraint \eqref{StrConstr}. 

A notational remark: Let $\{\ket{k}\}_{k=1}^{d_\mathrm{B}}$ be an orthonormal basis for $\HB$. For any system-bath operator $O$, we define 
\begin{equation}
O_{;k\ell}\equiv \bra{k}O\ket{\ell},
\end{equation}
a system-only operator. We  write $O_{;k\ell}^\dagger \equiv (O_{;k\ell})^\dagger = \bra{k}O\ket{\ell}^\dagger = \bra{\ell}O^\dagger\ket{k}=(O^\dagger)_{;\ell k}$. To avoid overloading the notation, when the meaning is clear from the context, we sometimes drop the semi-colon and write simply $O_{k\ell}$.

\section{Perfect OSQEC conditions}\label{sec:PerfQEC}
We begin with the conditions for perfect QEC, for a CPTP noise map $\cE$ on the system and bath. These conditions are the open-system analogue of what is often referred to as the Knill-Laflamme QEC conditions \cite{Knill+Laflamme:97} for a system-only noise map.

\begin{theorem}[OSQEC conditions]\label{PerfQEC}
A CPTP noise $\cE\sim\{E_a\}_{a=1}^N$ on the system and the bath is perfectly correctable on a code $\cC$ with a CPTP recovery $\cR=\Rs\otimes\idB$ if and only if
\begin{equation}\label{perfectQEC}
\Pc E_{a;k\ell}^\dagger E_{b;mn}\Pc=\Lambda_{ak\ell,bmn}\Pc
\end{equation}
for all $a,b=1,2,\ldots, N$, and $k,\ell,m,n=1,2,\ldots,\dB$. Here, $\Lambda$ is a Hermitian matrix for the triple indices $ak\ell$ and $bmn$: $(\Lambda^\dagger)_{ak\ell,bmn}=(\Lambda_{bmn,ak\ell})^*$.
\end{theorem}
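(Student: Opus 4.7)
The plan is to reduce Eq.~\eqref{perfectQEC} to the standard Knill--Laflamme (KL) condition applied to a naturally associated \emph{system-only} channel. Writing $E_a=\sum_{k\ell}E_{a;k\ell}\otimes\ket{k}\bra{\ell}$ and expanding the TP condition $\sum_a E_a^\dagger E_a=\idS\otimes\idB$ in the chosen bath basis yields the block identity $\sum_{a,k}E_{a;k\ell}^\dagger E_{a;k\ell'}=\delta_{\ell\ell'}\idS$, and in particular $\sum_{a,k,\ell}E_{a;k\ell}^\dagger E_{a;k\ell}=\dB\,\idS$. Thus $\{E_{a;k\ell}/\sqrt{\dB}\}_{a,k,\ell}$ is a legitimate system-only Kraus set, and I would define the companion channel $\cES(\rhoS)\equiv\trb\{\cE(\rhoS\otimes\idB/\dB)\}$---the effect of $\cE$ when the bath is initialized maximally mixed. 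Eq.~\eqref{perfectQEC} is then precisely the KL condition for $\cES$ on $\cC$, with $\Lambda$ absorbing the $\dB$ factor.

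For necessity, assume $\cR=\Rs\otimes\idB$ perfectly corrects $\cE$ on $\cC$. Feeding the particular code state $\rhoS\otimes\idB/\dB$ and tracing out the bath on both sides of $\trb\{(\cR\circ\cE)(\,\cdot\,)\}=\trb\{\,\cdot\,\}$ shows that $\Rs$ corrects $\cES$ on $\cC$, and the standard KL theorem immediately delivers Eq.~\eqref{perfectQEC}. The Hermiticity (in fact positive semidefiniteness) of $\Lambda$ follows from its Gram-matrix form, $\Lambda_{ak\ell,bmn}=\bra{c}E_{a;k\ell}^\dagger E_{b;mn}\ket{c}$ for any unit $\ket{c}\in\cC$.

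For sufficiency, I would execute the standard KL recipe on $\cES$: diagonalize the PSD matrix $\Lambda$, transform to an orthogonalized set $\Et_\alpha=\sum_{a,k,\ell}U_{\alpha,ak\ell}E_{a;k\ell}$ satisfying $\Pc\Et_\alpha^\dagger\Et_\beta\Pc=d_\alpha\delta_{\alpha\beta}\Pc$, polar-decompose $\Et_\alpha\Pc=\sqrt{d_\alpha}V_\alpha\Pc$ to obtain partial isometries $V_\alpha$ from $\cC$ onto mutually orthogonal error subspaces $\cC_\alpha\subset\HS$, and assemble $\Rs$ from Kraus operators $\{V_\alpha^\dagger\Pi_\alpha\}$ (with a completing operator to enforce TP, where $\Pi_\alpha$ projects onto $\cC_\alpha$). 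The main technical point is then to verify that the \emph{same} $\Rs$ corrects not only $\cES$ but all code states $\rho\in\sS(\cC\otimes\HB)$, which may be correlated with the bath in an arbitrary way. For a pure code state $\ket{\psi}=\sum_\ell\ket{c_\ell}\ket{\ell}$ with $\ket{c_\ell}\in\cC$, a direct expansion of $\Rs(\trb\{\cE(\ket{\psi}\bra{\psi})\})$ using the polar-decomposition form of $E_{a;k\ell}\Pc$ collapses via the KL orthogonality to $\sum_{\ell,\ell'}\bigl(\sum_{a,k}\Lambda_{ak\ell,ak\ell'}\bigr)\ket{c_\ell}\bra{c_{\ell'}}$; the bracketed coefficient equals $\delta_{\ell\ell'}$ by the TP-derived identity $\sum_{a,k}\Pc E_{a;k\ell}^\dagger E_{a;k\ell'}\Pc=\delta_{\ell\ell'}\Pc$ established in the first paragraph, so the output equals $\trb\{\ket{\psi}\bra{\psi}\}$. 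I expect this last step to be the main obstacle: the TP property of the full system-bath $\cE$ is precisely what eliminates the bath cross-terms and lets the KL recovery---designed for the max-mixed-bath input---nevertheless work for arbitrary bath-correlated code states.
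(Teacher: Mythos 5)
Your proof is correct, and it takes a genuinely different route from the paper's in the necessity direction. The paper proves ``$\Rightarrow$'' directly: it regards $\trb\{(\cR\circ\cE)(\,\cdot\,)\}$ and $\trb\{\,\cdot\,\}$ as two Kraus representations of the same channel on the code space, invokes the unitary freedom of Kraus representations to obtain $R_c\bra{m}E_bP=\sum_n u_{n,mcb}\bra{n}P$, and then computes $\Pc E_{a;k\ell}^\dagger E_{b;mn}\Pc$ by inserting $\sum_c R_c^\dagger R_c=\idS$. You instead specialize the correction identity to the single code state $\rhoS\otimes\idB/\dB$, use the fact that $\trb$ commutes with $\Rs\otimes\idB$, and conclude that $\Rs$ corrects the system-only channel $\cG$ of Corollary~\ref{GCor}, whence the standard Knill--Laflamme theorem delivers Eq.~\eqref{perfectQEC}; your Gram-matrix observation for the Hermiticity (indeed positivity) of $\Lambda$ is also valid. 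This is shorter and makes the role of $\cG$ logically primary rather than a corollary, at the cost of outsourcing a step to the KL theorem; the paper's direct argument produces $\Lambda$ explicitly in terms of the unitary $u$. In the sufficiency direction the two arguments coincide in substance: both diagonalize $\Lambda$, build the universal recovery from the orthogonalized operators ($\{\Pc F_{ak\ell}^\dagger/\sqrt{\lambda_{ak\ell}}\}$ in the paper, $\{V_\alpha^\dagger\Pi_\alpha\}$ in your polar-decomposition phrasing---the same operators), and then verify the recovery on an \emph{arbitrary}, possibly bath-entangled, code state, with the TP property of the full system-bath $\cE$ supplying the identity $\sum_{a,k}\Pc E_{a;k\ell}^\dagger E_{a;k\ell'}\Pc=\delta_{\ell\ell'}\Pc$ that eliminates the bath cross-terms. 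You correctly identified this as the step that does not follow from correctability of $\cG$ alone, and your computation of it is sound. One small bookkeeping point worth a line: the completing Kraus operator you add to make $\Rs$ trace-preserving contributes nothing because $\trb\{\cE(\rho)\}$ is supported on $\bigoplus_\alpha\cC_\alpha$ for every code state $\rho$.
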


\begin{proof}
``$\Rightarrow$". Suppose $\cE$ is perfectly correctable on $\cC$, i.e., there exists a CPTP recovery $\cR=\Rs\otimes\idB$, with $\Rs \sim\{R_c\}_{c=1}^K$,  such that
\begin{equation}\label{eq:thm1}
\trb\{ (\cR\circ\cE)(\rho)\}=\Rs\big(\trb\{\cE(\rho)\}\big)=\trb\{\rho\}
\end{equation}
for any $\rho\in \sS(\cC\otimes\HB)$. Note that $\rho=P\rho P$, and $\Pc R_c=R_c$. 
The left-hand side of Eq.~\eqref{eq:thm1} amounts to the action, on $\rho$, of a quantum channel with Kraus operators $\{R_c \bra{m} E_a P\}$; the right-hand side is the channel with Kraus operators $\{\Pc\bra{n}=\bra{n}P\}$. The two Kraus representations for the same quantum channel must be unitarily related, i.e., $R_c\bra{m}E_bP=\sum_{n=1}^{d_\text{B}}u_{n,mcb}\bra{n}P$ for a unitary $u$. As $\Rs$ is TP, we also have $\sum_{c=1}^K R_c^\dagger R_c=\idS$. Thus, we have
\begin{align}
\Pc E_{a;k\ell}^\dagger E_{b;mn}\Pc
=& \Pc\bra{\ell}E_a^\dagger\ket{k}\Bigl(\sum_c\! R_c^\dagger R_c\!\Bigr)\bra{m}E_b\ket{n}\Pc  \nonumber\\
=&\sum_c u_{\ell,kca}^* u_{n,mcb} \Pc\equiv\Lambda_{ak\ell,bmn}\Pc.
\end{align}

\noindent ``$\Leftarrow$". Suppose Eq.~\eqref{perfectQEC} holds. The Hermitian $\Lambda$ matrix can be diagonalized with a unitary $v$, 
\begin{equation}
\Lambda_{ak\ell,bmn}=\sum_{a'k'\ell'}v_{ak\ell,a'k'\ell'}\lambda_{a'k'\ell'}v_{bmn,a'k'\ell'}^*.
\end{equation}
By defining $F_{ak\ell}\equiv\sum_{bmn}v_{bmn,ak\ell}E_{b;mn}$, or equivalently, $E_{b;mn}\equiv\sum_{ak\ell}v_{bmn,ak\ell}^*F_{ak\ell}$,
the QEC condition can be written in the diagonal form \cite{note2},
\begin{equation}\label{eqFakl2}
\Pc F_{ak\ell}^\dagger F_{bmn}\Pc=\lambda_{ak\ell}\delta_{ak\ell,bmn}\Pc\quad\forall a,b,k,\ell,m,n. 
\end{equation}
Now, let $R_{ak\ell}\equiv\Pc F_{ak\ell}^\dagger/\sqrt{\lambda_{ak\ell}}$, and regard these as the Kraus operators for a CPTP (recovery) map $\Rs$ on the system; let $\cR\equiv \Rs\otimes\idB$. Writing $\rho=\sum_{k,\ell}\rho_{k\ell}\otimes\ket{k}\bra{\ell}$, where $\Pc\rho_{k\ell}\Pc=\rho_{k\ell}$, we have
\begin{align}
&\trb\{ (\mathcal{R}\circ\mathcal{E})(\rho)\}=\Rs\Big(\sum_m \bra{m}\mathcal{E}(\rho)\ket{m} \Big)\\
=&\sum_{bpq}\sum_{ak\ell m} R_{bpq}E_{a;mk} \Pc\rho_{k\ell} \Pc  E_{a;m\ell}^\dagger R_{bpq}^\dagger\nonumber\\
=&\sum_{abk\ell mpq}  v_{am\ell,bpq} \lambda_{bpq}v_{amk,bpq}^* \Pc\rho_{k\ell}\Pc\nonumber\\
=&\sum_{ak\ell m} \Lambda_{am\ell,amk} \Pc \rho_{k\ell}\Pc\nonumber\\
=&\sum_{ak\ell m} \Pc E_{a;m\ell}^\dagger E_{a;mk} \rho_{k\ell}\Pc  \nonumber\\
=& \sum_{k\ell} \Pc\bra{\ell}  {\biggl[\sum_a E_a^\dagger \Big(\sum_m \ket{m}\bra{m}\Big) E_a\biggr]} \ket{k} \rho_{k\ell}\Pc \nonumber \\
=&\sum_{k,\ell=1}^{d_\text{B}} \Pc\rho_{k\ell}\Pc \delta_{k\ell}= \trb\{\rho\},\nonumber
\end{align}
where, in the fifth line, we used Eq.~\eqref{perfectQEC}, and in the sixth, the fact that $\cE$ is TP.
\end{proof}

Note that the OSQEC conditions also hold for a CP noise map $\mathcal{E}\sim\{E_a\}$ that is sub-TP (at least on the code space), i.e., $P\bigl(\sum_a E_a^\dagger E_a\bigr) P=\gamma^2 P$, $\gamma$ a constant so that $\tr\{\cE(\rho)\}=\gamma^2\tr\{\rho\}$ same for any code state $\rho$. In this case, under perfect QEC, the recovery undoes the noise up to the factor $\gamma^2$: $\trb\{ (\mathcal{R}\circ\mathcal{E})(\rho)\}=\gamma^2\trb\{\rho\}$. Below, we will discuss only CPTP maps, in preparation for the generalization to the approximate case, but the results within this section on perfect QEC are all applicable to this sub-TP situation by the replacement $E_a\rightarrow \frac{1}{\gamma} E_a$ in all statements. In particular, condition \eqref{perfectQEC} looks exactly the same, except that $\Lambda$ now contains a factor of $\gamma^2$.

We mention here the relationship between our OSQEC conditions and the conditions for OQEC, or subsystem codes. In OQEC, one considers a system with a Hilbert space structure
\begin{equation}
\cH=\cQ\oplus\cQ^\perp\equiv(\cH_\mathrm{A}\otimes\cH_\mathrm{B})\oplus \cQ^\perp,
\end{equation}
where $\cQ\equiv \cH_\mathrm{A}\otimes\cH_\mathrm{B}$ has a tensor-product structure. One encodes information into one of the two factors (subsystems) of $\cQ$, say $\cH_\mathrm{A}$. $\cQ$ is the code subspace, but the information is carried only by $\cH_\mathrm{A}$, i.e., two code states supported on $\cQ$ encode the same information if they have the same state on subsystem A. $\cQ$ is said to be correctable for a CPTP noise $\mathcal{E}$ on $\mathcal{H}$ if there exists a CPTP map $\mathcal{R}$ on $\mathcal{H}$ such that 
\begin{align}
\trb\{(\cR\circ\cE)(\rho)\}=\rho_\mathrm{A},
\end{align}
for $\rho\equiv \rho_\mathrm{A}\otimes\rho_\mathrm{B}$, where $\rho_\mathrm{A}$ and $\rho_\mathrm{B}$ are states on subsystems A and B, respectively. Algebraic conditions for (perfect) OQEC are known \cite{Nielsen+Poulin:07, Kribs+Spekkens:06, Choi+et.al.:09}:
\begin{align}
P_\mathrm{A} \bra{\ell} E_i^\dagger E_j\ket{k} P_\mathrm{A} &= \Lambda_{i\ell,jk} P_\mathrm{A}, \quad\forall i,j,k,\ell,\label{eq34}
\end{align}
where $\cE\sim\{E_i\}$, $P_\mathrm{A}$ is the projector onto $\cH_\mathrm{A}$, and $\{\ket{\ell}\}$ is an orthonormal basis for $\HB$.

Structurally, the cases of OSQEC and OQEC are similar: The noise map $\cE$ acts on the full Hilbert space $\cH$, and we store information in a tensor-product factor of a subspace in the full Hilbert space---the $\cC$ factor in the $\cC\otimes\HB$ subspace of $\cH$ for OSQEC, and the $\cH_\mathrm{A}$ factor in the $\cH_\mathrm{A}\otimes\cH_\mathrm{B}$ subspace for OQEC. The crucial difference lies in the fact that in OSQEC, the recovery map $\cR$ can act only on the system, i.e., not on the $\HB$ factor belonging to the uncontrollable or inaccessible bath; in OQEC, $\cR$ can, and will in general, act on both factors, $\cH_\mathrm{A}$ and $\cH_\mathrm{B}$. As such, the OSQEC conditions are more stringent: A code satisfying the OSQEC conditions will satisfy the OQEC conditions, but the converse is not true. Physically speaking, in the setting of OSQEC, we have no access to the bath, and any information that flows there cannot be recovered. In OQEC, however, although the information is carried by subsystem A only, we do have access to and full control on subsystems A and B, and information that flows out to B from A can be recovered, if the OQEC conditions are satisfied.

\medskip
\begin{corollary}\label{GCor}
A CPTP noise $\cE$ is perfectly correctable on code $\cC$ if and only if the system-only channel
\begin{equation}\label{perfectQEC-GCor}
\cG(\,\cdot\,)\equiv(\trb\circ\cE){\left(\,\cdot\,\otimes \tfrac{1}{\dB}\idB\right)}
\end{equation}
is perfectly correctable on the same code $\cC$. In particular, if the recovery for $\cE$ is $\cR=\Rs\otimes\idB$, then the recovery for $\cG$ is $\Rs$.
\end{corollary}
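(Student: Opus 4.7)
The plan is to recognize that, with a suitable Kraus representation of $\cG$, the ordinary (system-only) Knill--Laflamme conditions for $\cG$ coincide \emph{exactly} with the OSQEC conditions \eqref{perfectQEC} for $\cE$, so that the corollary falls out by invoking Theorem~\ref{PerfQEC} on one side and the standard Knill--Laflamme theorem on the other, with the recovery constructions on the two sides producing the same system-only map $\Rs$.

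First I would compute an explicit Kraus set for $\cG$. Writing $\tfrac{1}{\dB}\idB=\tfrac{1}{\dB}\sum_m\ket{m}\bra{m}$ and expanding each $E_a=\sum_{k\ell}E_{a;k\ell}\otimes\ket{k}\bra{\ell}$, a short calculation gives
\begin{equation}
\cG(\rhoS)=\sum_{a,k,m}\Bigl(\tfrac{1}{\sqrt{\dB}}E_{a;km}\Bigr)\rhoS\Bigl(\tfrac{1}{\sqrt{\dB}}E_{a;km}\Bigr)^{\!\dagger},
\end{equation}
so $\cG\sim\{\tfrac{1}{\sqrt{\dB}}E_{a;km}\}_{a,k,m}$ with Kraus label $(a,k,m)$. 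That $\cG$ is TP follows from TP of $\cE$, since $\sum_{a,k,m}E_{a;km}^\dagger E_{a;km}=\dB\,\idS$ after summing the bath-diagonal TP condition over $m$.

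Next I would match the two correctability criteria. The Knill--Laflamme conditions for $\cG$ with these Kraus operators demand $\Pc E_{a;km}^\dagger E_{b;k'm'}\Pc\propto \Pc$ for all index tuples $(a,k,m,b,k',m')$, which, after a trivial relabelling of bath indices, is literally condition~\eqref{perfectQEC}. Hence $\cG$ is perfectly correctable on $\cC$ (by the ordinary Knill--Laflamme theorem) iff $\cE$ is perfectly correctable on $\cC$ in the OSQEC sense (by Theorem~\ref{PerfQEC}). This already gives the ``iff'' part of the corollary.

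Finally, to see that the recoveries coincide, I would follow the ``$\Leftarrow$'' construction in the proof of Theorem~\ref{PerfQEC}: diagonalize $\Lambda$ with a unitary $v$, set $F_{ak\ell}\equiv\sum_{bmn}v_{bmn,ak\ell}E_{b;mn}$, and $R_{ak\ell}\equiv\Pc F_{ak\ell}^\dagger/\sqrt{\lambda_{ak\ell}}$. The analogous matrix appearing in the Knill--Laflamme analysis of $\cG$ differs from $\Lambda$ only by an overall factor of $\dB$, so it has the same eigenvectors $v$; the $1/\sqrt{\dB}$ factors in the Kraus operators of $\cG$ cancel against the rescaled eigenvalues, and both constructions produce \emph{the same} system-only CPTP map $\Rs$. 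Thus $\cR=\Rs\otimes\idB$ recovers $\cE$ precisely when $\Rs$ recovers $\cG$. The only mild ``obstacle'' is the bath-index bookkeeping---verifying that the $\dB$ factors drop out cleanly and that the recovery so built automatically obeys the structural constraint~\eqref{StrConstr}---but no new idea beyond Theorem~\ref{PerfQEC} is required.
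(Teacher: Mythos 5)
Your proof is correct and follows essentially the same route as the paper's: identify the Kraus operators of $\cG$ as $E_{a;k\ell}/\sqrt{\dB}$ and observe that the Knill--Laflamme conditions for $\cG$ coincide with the OSQEC conditions \eqref{perfectQEC}. Your additional bookkeeping showing that the $\dB$ factors cancel in the recovery construction, so that both sides yield the same $\Rs$, is a correct elaboration of a step the paper leaves implicit.
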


\begin{proof}
Observe that the Kraus operators for $\mathcal{G}$ are simply $G_{ak\ell}\equiv E_{a;k\ell}/\sqrt{\dB}$. The conditions Eq.~\eqref{perfectQEC} are hence equivalent to the standard system-only Knill-Laflamme QEC conditions \cite{Knill+Laflamme:97} for the system-only channel $\mathcal{G}$.
\end{proof}

\noindent Corollary \ref{GCor} establishes a relation between the correctability of our system-bath joint CPTP noise $\cE$, and the correctability of a system-only quantum channel $\cG$. We emphasize, however, that such a relation exists only in quantifying the performance of QEC. $\cG$ is \emph{not} an alternate or effective description of the action of $\cE$ on the system:
\begin{equation}
\cG\bigl(\trb\{\rho\}\bigr)\neq \trb\{\cE(\rho)\}
\end{equation}
in general.

Let us define scaled versions of the operators $F_{ak\ell}$ above: Let $F_\alpha\equiv F_{ak\ell}/\sqrt{\lambda_{ak\ell}}$, writing $\alpha$ in place of the triple-index $ak\ell$. The set $\{F_\alpha\}$ satisfies the orthonormality condition
\begin{equation}\label{eq:ONCond}
\Pc F_\alpha^\dagger F_\beta \Pc=\delta_{\alpha\beta}\Pc\qquad\forall\alpha,\beta.
\end{equation}
The set $\{F_\alpha\Pc\}$ captures the correctability properties of the code $\cC$. The recovery map constructed from this set as $\cR_\mathrm{S}\sim\{\Pc F_\alpha^\dagger\}$, which we had used already in our proof of Theorem \ref{PerfQEC}, has universal properties---we hence call it the \emph{universal recovery}---as captured in the following Lemma.

\medskip
\begin{lemma}[OSQEC conditions on linear span]\label{SpanCor}
Suppose a CPTP noise $\cE\sim \{E_a\}$ is perfectly correctable on code $\cC$. Then, any CPTP noise $\widetilde{\cE}\sim \{\widetilde{E}_b\}$ such that $\widetilde{E}_b P\in \mathrm{span}\{E_a P\}$ is also perfectly correctable on $\cC$. Moreover, both are correctable using the same recovery $\cR=\cR_\mathrm{S}\otimes \idB$, with $\cR_\mathrm{S}\sim\{\Pc F_\alpha^\dagger\}$, with the $F_\alpha$s as defined above for $\mathcal{E}$. 
\end{lemma}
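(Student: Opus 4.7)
Plan: Exploit the span hypothesis to pick scalars $c_{ba}$ with $\widetilde{E}_b P=\sum_a c_{ba}E_a P$, and then show that the same recovery $\cR=\Rs\otimes\idB$ built from the $F_\alpha$'s of $\cE$ corrects $\widetilde{\cE}$ as well. For any code state $\rho=P\rho P$, the action of $\widetilde{\cE}$ collapses to
\begin{equation}
\widetilde{\cE}(\rho)=\sum_{a,a'}K_{aa'}\,E_a\rho E_{a'}^\dagger,\quad K_{aa'}\equiv\sum_b c_{ba}c_{ba'}^*.\nonumber
\end{equation}
Taking adjoints of the span relation gives $P\widetilde{E}_b^\dagger=\sum_{a'}c_{ba'}^*PE_{a'}^\dagger$, so trace-preservation of $\widetilde{\cE}$ yields the auxiliary identity
\begin{equation}
P\Bigl(\sum_{a,a'}K_{aa'}E_{a'}^\dagger E_a\Bigr)P=\sum_b P\widetilde{E}_b^\dagger\widetilde{E}_b P=P.\nonumber
\end{equation}

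Next I would rerun the calculation from Theorem~\ref{PerfQEC}'s proof, carrying the weights $K_{aa'}$ through but leaving the $F_\alpha$'s and hence $\cR$ unchanged. Writing $\rho=\sum_{k,\ell}\rho_{k\ell}\otimes\ket{k}\bra{\ell}$ with $\rho_{k\ell}=\Pc\rho_{k\ell}\Pc$, substituting $E_{a;mn}\Pc=\sum_\alpha v_{amn,\alpha}^*\sqrt{\lambda_\alpha}\,F_\alpha\Pc$ on each factor in every $E_a\rho E_{a'}^\dagger$, and collapsing the Kraus sum over $\beta$ via orthonormality $\Pc F_\alpha^\dagger F_\beta\Pc=\delta_{\alpha\beta}\Pc$, the computation should reduce to
\begin{equation}
\trb\{(\cR\circ\widetilde{\cE})(\rho)\}=\sum_{k,\ell}\Bigl(\sum_{a,a',m}K_{aa'}\Lambda_{a'm\ell,amk}\Bigr)\rho_{k\ell}.\nonumber
\end{equation}

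To finish, I would invoke the OSQEC condition~\eqref{perfectQEC} for $\cE$ to rewrite $\Lambda_{a'm\ell,amk}\Pc=\Pc E_{a';m\ell}^\dagger E_{a;mk}\Pc$, perform the sum over $m$ via $\sum_m\ket{m}\bra{m}=\idB$ to reassemble $\Pc\bra{\ell}\bigl(\sum_{a,a'}K_{aa'}E_{a'}^\dagger E_a\bigr)\ket{k}\Pc$, and then apply the auxiliary identity above to collapse this to $\Pc\delta_{k\ell}$. Together with $\rho_{k\ell}=\Pc\rho_{k\ell}\Pc$, this yields $\trb\{(\cR\circ\widetilde{\cE})(\rho)\}=\sum_k\rho_{kk}=\trb\{\rho\}$, as required.

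The main conceptual obstacle is that plain trace-preservation of $\cE$ (namely $\sum_a E_a^\dagger E_a=\id$) is the wrong ingredient to close the argument---one instead needs the \emph{weighted} combination $\sum_{a,a'}K_{aa'}E_{a'}^\dagger E_a$ to act as $\id$ on the code subspace, which is exactly what trace-preservation of $\widetilde{\cE}$, transported through the span relation, delivers. Once that bridge is built, what remains is an index-heavy but routine adaptation of Theorem~\ref{PerfQEC}'s argument to accommodate the cross terms $E_a\rho E_{a'}^\dagger$ with $a\neq a'$.
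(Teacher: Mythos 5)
Your proof is correct, but it takes a genuinely different route from the paper. The paper's proof is a two-line reduction: it invokes Corollary~\ref{GCor} to pass from the system--bath maps $\cE,\widetilde{\cE}$ to the system-only channels $\cG,\widetilde{\cG}$ with Kraus operators $\{E_{a;k\ell}/\sqrt{\dB}\}$ and $\{\widetilde{E}_{b;mn}/\sqrt{\dB}\}$, notes that the span hypothesis on $P$ descends to $\widetilde{E}_{b;mn}\Pc\in\mathrm{span}\{E_{a;k\ell}\Pc\}=\mathrm{span}\{F_\alpha\Pc\}$, and then cites the standard system-only fact that correctability is preserved under linear span of Kraus operators with the same recovery. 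You instead re-prove the span-closure property from scratch in the open-system setting: you expand $\widetilde{E}_bP=\sum_a c_{ba}E_aP$, track the Gram-type weights $K_{aa'}=\sum_b c_{ba}c_{ba'}^*$ through the computation of Theorem~\ref{PerfQEC}'s ``$\Leftarrow$'' direction, and close the argument with the correct observation that it is trace-preservation of $\widetilde{\cE}$ --- transported through the span relation into the identity $P\bigl(\sum_{a,a'}K_{aa'}E_{a'}^\dagger E_a\bigr)P=P$ --- rather than trace-preservation of $\cE$ that supplies the final normalization. I checked your intermediate identities (the collapse $\Pc F_\alpha^\dagger E_{a;mk}\Pc=v_{amk,\alpha}^*\sqrt{\lambda_\alpha}\,\Pc$, the resummation over $m$ to $\Pc\bra{\ell}\bigl(\sum_{a,a'}K_{aa'}E_{a'}^\dagger E_a\bigr)\ket{k}\Pc=\delta_{k\ell}\Pc$) and they are all sound. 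The trade-off: the paper's argument is shorter and leverages existing machinery, while yours is self-contained, makes explicit exactly where each hypothesis enters, and would generalize immediately to the sub-trace-preserving case discussed after Theorem~\ref{PerfQEC}. One cosmetic caveat shared with the paper: the recovery $\cR_\mathrm{S}\sim\{\Pc F_\alpha^\dagger\}$ must implicitly be completed to a trace-preserving map outside the range of the correctable part, and the sum over $\alpha$ should be restricted to $\lambda_\alpha>0$; neither point affects your calculation.
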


\begin{proof}
Define the system-only channels $\cG(\,\cdot\,)\equiv(\trb\circ\cE)\bigl(\,\cdot\,\otimes \tfrac{1}{\dB}\idB\bigr)$ and 
$\widetilde{\cG}(\,\cdot\,)\equiv(\trb\circ\widetilde{\cE})\bigl(\,\cdot\,\otimes \tfrac{1}{\dB}\idB\bigr)$, with their respective sets of Kraus operators $\cG\sim\{E_{a;k\ell}/\sqrt{\dB}\}$ and $\widetilde{\cG}\sim\{\widetilde{E}_{b;mn}/\sqrt{\dB}\}$. Since $\widetilde{E}_b P\in \mathrm{span}\{E_a P\}$, it follows that $\widetilde{E}_{b;mn}\Pc\in \mathrm{span}\{E_{a;k\ell} \Pc\}=\mathrm{span}\{F_\alpha \Pc\}$. Now, if a system-only quantum channel  $\mathcal{E}_\mathrm{S}\sim\{S_a\}$ is perfectly correctable on $\cC$, any system-only quantum channel $\widetilde{\mathcal{E}}_\mathrm{S}\sim\{\widetilde{S}_a\}$ where $\widetilde{S}_a\in\mathrm{span}\{S_a\}$ is also correctable on $\cC$ by the same recovery operation \cite{Nielsen+Chuang:10}. As $\cG$ is correctable by Corollary \ref{GCor}, it follows that $\widetilde{\cG}$ is correctable. By Corollary \ref{GCor} once more, $\widetilde{\cE}$ thus is correctable, with $\cR_\mathrm{S}\sim\{\Pc F_\alpha^\dagger\}$.
\end{proof}

Turning the logic of the proof of Lemma \ref{SpanCor} around, one can alternatively say that, given any set of $F_\alpha$ operators orthonormal on $\Pc$, i.e., satisfies condition \eqref{eq:ONCond}, any CP map $\cE\sim\{E_a\}$ such that $E_{a;k\ell}\Pc\in\mathrm{span}\{F_\alpha\Pc\}$ is perfectly correctable on $\cC$, and the recovery map is the universal recovery $\cR_\mathrm{S}\sim\{\Pc F_\alpha^\dagger\}$. Note that the statement $\bra{k}E_a\ket{\ell}\Pc=E_{a;k\ell}\Pc\in\mathrm{span}\{F_\alpha\Pc\}$ holds for any basis $\{|k\rangle\}$ if it holds for one particular basis choice. Hence, the perfect correctablility of $\cE\sim\{E_a\}$ is dependent on its $E_a$s, but not on the choice of the bath basis. One should also observe that Eq.~\eqref{perfectQEC} in Theorem \ref{PerfQEC} is form-invariant under a bath-basis change, and Eq.~\eqref{perfectQEC-GCor} in Corollary \ref{GCor} makes no reference to any bath basis at all. Indeed, it would have been meaningless to talk about the correctability of a system-bath CP noise, if there is such a basis-choice dependence on the inaccessible bath.

\section{Approximate OSQEC}\label{sec:AQEC}
While perfect correctability is certainly desirable, nature unfortunately does not often provide us with such an option. The prototypical example is that of independent noise acting on $n$ physical qubits, where one may find codes so that the perfect QEC conditions are fulfilled, provided one considers errors on no more than some $t<n$ qubits. The real noise though includes terms with errors on more than $t$ qubits, and is at best only approximately correctable on the chosen code designed for $<t$ errors. In this section then, we discuss the more generic situation of approximate OSQEC, where perfect removal of noise is not possible and the optimal fidelity loss $\eta_\textrm{op}(\cC)$ is nonzero. There has been a lot of past work on AQEC, such as the construction of approximate recovery operations, the required conditions, etc., all in the context of system-only noise (see, for example, Refs. \cite{HK+Prabha:10, HK+Prabha:12, Leung+et.al:97, Schumacher+et.al:02, Yamamoto+et.al:05, Reimpell+Werner:05, Kosut+et.al:08, Fletcher+et.al:08:1, Fletcher+et.al:08:2, Tyson:10, Beny+Oreshkov:10, Beny:11, Cafaro+Loock:14:1, Cafaro+Loock:14:2, Flammia+et.al:17, Lihm+et.al:17}). Here, we consider a joint system-bath noise that contains a dominant part that is perfectly correctable on the chosen code. The remaining ``uncorrectable part" is small in comparison. The following theorem quantifies the performance of the code in the approximate OSQEC situation.

\begin{theorem}[Approximate OSQEC]\label{thm:AOSQEC}
Consider a (system-bath) CPTP noise $\cE\sim\{E_a\}$ and a code space $\cC$. Suppose $E_a=\widetilde{E}_a+B_a$ such that $\cC$ is perfectly correctable under the CP $\widetilde{\cE}\sim\{\Et_a\}$. Then, $\cE$ is $\epsilon$-correctable on $\cC$, with 
\begin{equation}\label{eq:AOSQEC}
\epsilon\,\equiv\,\tfrac{1}{8}{\left(1+\sqrt 2\right)} {\left\Vert \Theta^\dagger\Theta\right\Vert} + {\left\Vert \Delta \right\Vert}+O{\left((BP)^3\right)},
\end{equation}
where $\Theta\equiv \sum_a P(B_a^\dagger \Et_a-\Et_a^\dagger B_a)P$, $\Delta\equiv \sum_a  P B_a^\dagger B_a P$,  and $\Vert\,\cdot\,\Vert$ is the operator norm. The $O{\left((BP)^3\right)}$ term indicates that our bound is accurate to second order in the size of the $B_a$ operators, restricted to their action on $P (\equiv\Pc\otimes\idB)$.
\end{theorem}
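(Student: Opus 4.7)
The plan is to take as the recovery $\cR=\Rs\otimes\idB$ the universal recovery of Theorem~\ref{PerfQEC} and Lemma~\ref{SpanCor} constructed for the correctable component $\widetilde{\cE}$, i.e., $\Rs\sim\{R_\alpha\equiv\Pc\widetilde F_\alpha^\dagger\}$ with $\{\widetilde F_\alpha\}$ the operators orthonormal on $\Pc$ obtained by diagonalizing $\widetilde{\cE}$'s OSQEC $\widetilde\Lambda$--matrix. Because the worst-case fidelity loss is attained on a pure code state $\rho=|\psi\rangle\langle\psi|\in\sS(\cC\otimes\HB)$, the task reduces to bounding $\eta_\cR(\rho)=1-F^2(\rhoS,\sigma_\mathrm{S})$ with $\sigma\equiv(\cR\circ\cE)(\rho)$ and $\sigma_\mathrm{S}\equiv\trb\{\sigma\}$.

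The core of the proof is a perturbative expansion of $\sigma_\mathrm{S}$ in $B_aP$. Splitting $E_a\rho E_a^\dagger$ into a zeroth-order part $\widetilde E_a\rho\widetilde E_a^\dagger$, a cross term $\widetilde E_a\rho B_a^\dagger+B_a\rho\widetilde E_a^\dagger$, and a quadratic part $B_a\rho B_a^\dagger$, I would use the identity $(R_\alpha\otimes\idB)\widetilde E_b\,P=\sqrt{\lambda_\alpha}\,\Pc\otimes V_{b,\alpha}$ (derivable from the proof of Theorem~\ref{PerfQEC}, with $V_{b,\alpha}$ a bath operator built from the diagonalizing unitary $v$) to rewrite $\cR(\widetilde{\cE}(\rho))$ explicitly as a bath channel acting on $|\psi\rangle\langle\psi|$, and similarly reduce the cross-term piece to a system--bath operator linear in $B_aP$. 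The CPTP constraint on $\cE$, restricted to the code, reads $P\sum_a\widetilde E_a^\dagger\widetilde E_a\,P=P-H-\Delta$, with $H\equiv\sum_a P(\widetilde E_a^\dagger B_a+B_a^\dagger\widetilde E_a)P$ the Hermitian first-order combination (also related to $\Theta$ by $H-\Theta=2\sum_a P\widetilde E_a^\dagger B_a\,P$). After taking bath matrix elements and tracing, the $H$-piece first-order deviation that $\cR(\widetilde{\cE}(\rho))$ acquires from this TP defect cancels exactly the Hermitian part of the cross-term contribution, so the linear-in-$B$ residual of $\sigma_\mathrm{S}-\rhoS$ is controlled entirely by the anti-Hermitian $\Theta$, while the quadratic residual is controlled by $\Delta$.

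To convert this operator-level deviation into a fidelity bound, the key fact is $\Theta^\dagger=-\Theta$. Consequently, the $\Theta$-dependent linear-in-$B$ piece of the Hermitian trace-zero perturbation $\delta\equiv\sigma_\mathrm{S}-\rhoS$ has no definite-sign diagonal contribution in the purification $|\psi\rangle$, and so enters $F^2(\rhoS,\rhoS+\delta)$ only at second order through a Cauchy--Schwarz estimate $\langle\psi|\Theta^\dagger\Theta|\psi\rangle\leq\|\Theta^\dagger\Theta\|$. Combined with the direct operator-norm bound $\|\Delta\|$ on the quadratic-in-$B$ residual, and the elementary inequalities $F\geq 1-\tfrac{1}{2}\|\delta\|_1$ and $1-F^2=(1-F)(1+F)$ expanded to second order, a two-variable quadratic optimization balancing the $\Theta$- and $\Delta$-contributions yields the prefactor $\tfrac{1}{8}(1+\sqrt{2})$. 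Higher-order crossings, at most cubic in $B_aP$, are absorbed into the $O((BP)^3)$ remainder.

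The main technical obstacle is the cancellation in the second paragraph---showing that every Hermitian first-order-in-$B$ deviation of $\sigma_\mathrm{S}$ from $\rhoS$ cancels exactly against the shift produced by $\widetilde{\cE}$'s departure from strict TP-ness on the code, so that only $\Theta$ survives at linear order and no $\dB$-dependent prefactor appears. This cancellation is the structural feature that requires techniques beyond the system-only analysis of Refs.~\cite{HK+Prabha:10, HK+Prabha:12}, where there is no bath index and the analogous cross term is anti-Hermitian by inspection. Securing the sharp constant $\tfrac{1}{8}(1+\sqrt{2})$ rather than a crude $O(\|\Theta^\dagger\Theta\|+\|\Delta\|)$ bound is the secondary hurdle, and hinges on the quadratic-optimization step above.
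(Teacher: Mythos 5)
Your setup matches the paper's: the same universal recovery $\Rs\sim\{\Pc F_\alpha^\dagger\}$ built from $\widetilde\cE$, reduction to pure system-bath code states, expansion of $\trb\{(\cR\circ\cE)(\rho)\}=\rhoS+\sigma_1+\sigma_2$ in powers of $B_aP$, and the observation that the TP defect of $\widetilde\cE$ on the code cancels the Hermitian part of the cross terms so that the first-order residual is the commutator-type quantity built from the anti-Hermitian $\Theta$. All of that is correct and is exactly how the paper proceeds.

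The gap is in the step that converts the operator perturbation into the fidelity bound, which is the technical heart of the theorem. The inequality $F\geq 1-\tfrac{1}{2}\|\rhoS'-\rhoS\|_1$ cannot deliver the claimed result: $\|\sigma_1\|_1$ is \emph{first} order in $BP$, so this route yields a bound of order $\Vert\Theta\Vert$, not $\Vert\Theta^\dagger\Theta\Vert$. Knowing that $\tr\{\delta_1\}=0$ (your ``no definite-sign diagonal contribution'') is necessary but not sufficient; one must actually expand $\tr\{M^{1/2}\}$ with $M=\sqrt{\rhoS}\,\rhoS'\sqrt{\rhoS}$ to second order, solving the Sylvester equations $\rhoS\delta_1+\delta_1\rhoS=\sqrt{\rhoS}\sigma_1\sqrt{\rhoS}$ and $\delta_1^2+\rhoS\delta_2+\delta_2\rhoS=\sqrt{\rhoS}\sigma_2\sqrt{\rhoS}$, because the first-order perturbation feeds back into the second-order trace through the $-\delta_1^2$ term. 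Moreover, the constant $\tfrac{1}{8}(1+\sqrt2)$ does not come from ``a two-variable quadratic optimization balancing the $\Theta$- and $\Delta$-contributions'': in the paper the $\Theta$ and $\Delta$ pieces are bounded independently and simply added, and the $\sqrt2$ arises from maximizing $\tfrac{1+x}{1+x^2}$ over the ratio $x=\mu_k/\mu_\ell$ of Schmidt coefficients of the worst-case pure input state (maximum $\tfrac12(1+\sqrt2)$ at $x=\sqrt2-1$) while bounding the trace of the $-\delta_1^2$ back-reaction. Without this Schmidt-spectrum analysis your argument produces at best an unquantified $O(\Vert\Theta^\dagger\Theta\Vert+\Vert\Delta\Vert)$ statement, and as written the proposed tools would not even establish that the leading fidelity loss is second order in $BP$.
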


\begin{proof}
We give here a sketch of the proof, leaving some of the lengthy calculations to the Appendix. $\widetilde{\cE}$ satisfies the perfect QEC conditions on $\cC$. From Sec.~\ref{sec:PerfQEC} then, we have an equivalent set of Kraus operators $\{F_\alpha\Pc\}$  such that $\Pc F_\alpha^\dagger F_\beta\Pc=\delta_{\alpha\beta}\Pc$, and $\bra{k}\Et_a\ket{\ell}\Pc=\Et_{a ;k\ell}\Pc=\sum_\alpha e_{ak\ell,\alpha}F_\alpha \Pc$ for every $a,k,\ell$. The recovery for $\widetilde{\cE}$ on $\cC$ is $\cR\sim\{\Pc F_\alpha^\dagger\}$, which we will also use as the recovery map for $\cE$ on $\cC$. 
In addition, with $\bra{k}B_a\ket{\ell}\Pc=B_{a;k\ell}\Pc$, let $D_{\alpha a;k\ell}\equiv \Pc F_\alpha^\dagger B_{a;k\ell}\Pc$, and write the input code state $\rho=\sum_{k\ell=1}^{d_\text{B}} \rho_{k\ell} \otimes \ket{k}\bra{\ell}$ with $\rho_{k\ell}=\Pc\rho_{k\ell}\Pc$, a system-only operator. Note that $(\rho_{k\ell})^\dagger =\rho_{\ell k}$, by Hermiticity of $\rho$.
Observe that $\Pc F_\alpha^\dagger \Et_{a;k\ell}\Pc=e_{ak\ell,\alpha}\Pc$, and that $\sum_\alpha e_{ak\ell,\alpha}D_{\alpha a;km}^\dagger=\Pc B_{a;km}^\dagger \Et_{a;k\ell}\Pc$.

Now we calculate $\rhoS'\equiv \trb\{(\mathcal{R}\circ\mathcal{E})(\rho)\}$ (see Appendix for details):
\begin{align}
\label{eq23}\rhoS'&=\trb\{(\mathcal{R}\circ\mathcal{E})(\rho)\}=\rhoS+\sigma_1+\sigma_2 
\end{align}
\begin{align}
\textrm{with} \quad \rhoS&\equiv\trb\{\rho\},\nonumber\\
\sigma_1&\equiv\frac{1}{2}\sum_{\ell m}[\rho_{\ell m},\Theta_{m\ell}],\nonumber\\
\textrm{and}\quad\sigma_2&\equiv-\frac{1}{2}\sum_{\ell m}\{\rho_{\ell m},\Delta_{m\ell}\}+\sum_{ak\ell m}\cD_{a;k\ell m}(\rho_{\ell m}).\nonumber
\end{align}
Here, $\cD_{a;k\ell m}$ is the linear map $\cD_{a;k\ell m}(\cdot)\equiv \sum_\alpha D_{\alpha a;k\ell}(\cdot)D_{\alpha a;km}^\dagger$, and $[\,,\,]$ and $\{\,,\,\}$ are the commutator and anti-commutator, respectively. It is easy to check that $\sigma_1$ and $\sigma_2$ are Hermitian operators.

Above, we have grouped the terms in ``powers of $BP$", referring here to the $B_aP$ or $B_{a;k\ell}\Pc$ operators: $\sigma_1\sim\Theta\sim BP$, and $\sigma_2\sim\Delta \sim (BP)^2$. The $\rhoS$ term above is then the 0th-order-in-$BP$ term. We expect our results to be useful for situations where the $BP$ operators are small.

We want an expression for the fidelity between the input code state $\rhoS$ and the output state $\rhoS'$ after noise and recovery: $F\big(\rhoS, \rhoS'\big)=\tr{\left[\sqrt{\rhoS}\rhoS'\sqrt{\rhoS}\right]}^{1/2}$. Our quantity of interest, the minimum fidelity loss $\eta_\mathcal{R}\{\mathcal{C}\}$ requires a minimization over pure states $\rho\equiv \Psi\equiv \ket{\Psi}\bra{\Psi}$ with $\ket{\Psi}\in\cC\otimes\HB$.
Let $M\equiv\sqrt{\rhoS}\rhoS'\sqrt{\rhoS}$ so that $F(\rhoS,\rhoS')=\tr\{M^{1/2}\}$. We calculate $M$ in powers of $BP$; as we will see below, we need an expression accurate to second-order in $BP$. We hence write,
\begin{align}
M^{1/2}&\equiv \rhoS+\delta_1+\delta_2, \label{eq25}
\end{align}
where $\delta_1$ and $\delta_2$ are Hermitian operators that are linear and quadratic in $BP$, respectively. Comparing $(M^{1/2})^2=(\rhoS+\delta_1+\delta_2)^2$ with $M=\sqrt{\rhoS}\rhoS'\sqrt{\rhoS}=\sqrt{\rhoS}(\rhoS+\sigma_1+\sigma_2)\sqrt{\rhoS}$, we match terms of the same order in $BP$, neglecting terms of order $(BP)^3$ and higher:
\begin{subequations}
\label{eq:delta}
\begin{align}
\rhoS\delta_1+\delta_1\rhoS&=\sqrt{\rhoS}\sigma_1\sqrt{\rhoS}\label{eq:delta1}\\
\delta_1^2+\rhoS\delta_2+\delta_2\rhoS&=\sqrt{\rhoS}\sigma_2\sqrt{\rhoS}.\label{eq:delta2}
\end{align}
\end{subequations}
Explicit expressions for $\delta_1$ and $\delta_2$ can be obtained (see Appendix). It turns out that $\delta_1$ is traceless, and 
\begin{align}\label{eq:trdelta2}
0\geq \tr\{\delta_2\}
&\geq -{\left[\tfrac{1}{16}\bigl(1+\sqrt 2\bigr)\Vert\Theta^\dagger\Theta\Vert +\tfrac{1}{2}\Vert\Delta\Vert\right]},
\end{align}
for any pure input state. With $F(\rhoS,\rhoS')=\tr\{M^{1/2}\}=1+\tr\{\delta_1\}+\tr\{\delta_2\}+O\bigl((BP)^3\bigr)$, it follows that 
\begin{align}
\eta_\mathrm{op}(\cC)&\equiv\min_\cR\max_\rho{\left[1-F(\trb\{\rho\},\trb\{(\cR\circ\cE)\}(\rho))^2\right]}\nonumber\\
&\leq\max_\rho{\left[1-F(\rhoS,\rhoS')^2\right]}\nonumber\\
&=\max_\rho{\left[-2(\tr\{\delta_1\}+\tr\{\delta_2\})+O{\left((BP)^3\right)}\right]}\nonumber\\
&\leq \tfrac{1}{8}{\bigl(1+\sqrt 2\bigr)}\Vert\Theta^\dagger\Theta\Vert+\Vert\Delta\Vert\equiv \epsilon,
\end{align}
as given in Eq.~\eqref{eq:AOSQEC}.
\end{proof}

Our approach here to bounds for approximate OSQEC is very different from that used in Refs.~\cite{HK+Prabha:10, HK+Prabha:12} on system-only AQEC. There, AQEC conditions were derived, and the bounds on fidelity loss obtained, by perturbing the perfect QEC conditions, as we do above. However, this was done through the use of a recovery map termed the ``transpose channel" (also often referred to as the Petz recovery). In particular, it was shown in those papers that the transpose channel recovery is identical to the standard recovery map for perfect correction in the perfect QEC case, and that it continues to be a provably near-optimal recovery in the approximate case. This transpose channel is a map constructed from the given (system-only) noise channel and the (system-only) code: For a system-only noise map $\cE_\mathrm{S}\sim\{S_a\}$, its transpose channel for a given code $\cC$ is the map with the Kraus operators $\cR\sim\{\Pc S_a^\dagger\cE_\mathrm{S}(\Pc)^{-1/2}\}$ (with the inverse taken on the support). 

In the OSQEC case, the transpose channel plays less of a role. To start, one cannot use as recovery, the transpose channel corresponding to the full system-bath noise $\cE$. That would generically be a system-bath map, which violates the structural constraint \eqref{StrConstr}. One may imagine using a transpose channel recovery for some effective system-only map. However, as mentioned earlier, there is no single effective system-only map that would capture the action of $\cE$ on the code states, and the action of the noise on the information carried by the system depends on the state of the bath. One might have guessed, given the perfect OSQEC conditions, that a plausible candidate for defining the transpose channel is the $\cG$ map of Corollary \ref{GCor}. However, the transpose channel $\cR_\cG$ for that $\cG$ map would not reverse the effects of the action of $\cE$ on the system for input system-bath states unequal to $\rhoS\otimes\idB/\dB$. In fact, for $\cG'\neq\cG$, $\cR_\cG\circ\cG'$ is generally not even trace-preserving on the code space, as is needed for recovery, even if $\cG'$ is itself trace preserving. Without the ability to define a useful transpose channel recovery, we instead make use of the universal recovery from perfect OSQEC, and directly quantified the code performance in the approximate case. Note that, in the system-only AQEC situation, the universal recovery and the transpose channel can be shown to give the same recovery fidelity to first order in the deviation from the perfect situation.

\section{Conclusion} \label{sec:Conc}
In this work, we examined the performance of QEC in an open-system setting. We described the noise acting on the system as originating from a joint system-bath CPTP map, a framework that nicely includes and interpolates between standard noise models used for QIP. The noise model can manifest non-Markovian effects of conceptual and practical interest for QIP. Specifically, we derived conditions on the noise for the existence of codes that allow for perfect preservation of information with the use of a recovery operation that acts only on the system. This extends past work on QEC conditions for the standard system-only noise channels. Further connections with the system-only QEC situation were made from the observation that one has perfect correctability in the OSQEC case if and only if a system-only noise channel (the $\cG$ map of Corollary \ref{GCor}) is itself perfectly correctable and hence satisfies the standard Knill-Laflamme QEC conditions.

We also derived a bound on the worst-case fidelity for protection by a code under the action of the noise model in the case where only approximate, not perfect, correction is possible. This extends and completes the program for AQEC set out in Refs.~\cite{HK+Prabha:10, HK+Prabha:12}. There is a difference in approach here, however. Rather than the transpose channel recovery used in previous work, approximate recovery is carried out here using the universal recovery built only from the perfectly correctable part of the full noise map. The transpose channel turned out to be difficult to define for the case of OSQEC.

As mentioned in Sec.~\ref{sec:NoiseModel}, our work leaves open the question of identifying and characterizing the system-bath noise map $\cE$. Our derived conditions can only be checked if one possesses the full description of $\cE$. Currently, we can only assume that one has a good understanding of the system and bath dynamics, and can derive a suitable $\cE$ from an underlying microscopic system--(full-)bath Hamiltonian. It would be much more natural, and better aligned with the notion of \emph{experimental} system characterization common in other QIP tasks, if one could figure out how to apply tomography methods to find out about $\cE$. Such a capability will  no doubt have utility not just for our OSQEC problem here, but for all studies of open-system dynamics.

\section*{Acknowledgment}
The authors thank Yi-Cong Zheng, Jing Hao Chai, and Prabha Mandayam for insightful discussions. HKN is supported by Yale-NUS College [through internal grant (MOE Tier-1) IG14-LR001, and a start-up grant]. This work is also supported by the National Research Foundation, and the Ministry of
Education, Singapore.

\section*{APPENDIX}\label{App}
In this Appendix, we provide additional details for the proof of Theorem \ref{thm:AOSQEC}.\\

\noindent \textbf{Derivation of Eq.~\eqref{eq23}.}\\[1ex]
We begin with $\rhoS'\equiv \trb\{(\cR\circ\cE)(\rho)\}$,
\begin{align}\label{eq:rhoSp}
&\rhoS'\equiv \trb\{(\cR\circ\cE)(\rho)\}\\
=&\sum_\alpha \Pc F_\alpha^\dagger \Big( \sum_{ak\ell m} \bra{k} E_a \ket{\ell}\rho_{\ell m}\bra{m}E_a^\dagger\ket{k} \Big) F_\alpha\Pc\nonumber\\
=&\!\!\!\sum_{\alpha a k \ell m}  \!\!\!\Pc F_\alpha^\dagger (\Et_{a;k\ell}\!+\!B_{a;k\ell})\Pc \rho_{\ell m} \Pc(\Et_{a;km}^\dagger\!+\!B_{a;km}^\dagger) F_\alpha \Pc\nonumber\\
=&\sum_{a k \ell m} {\left[\sum_\alpha e_{ak\ell,\alpha} e_{akm,\alpha}^*\rho_{\ell m}+\Pc \Et_{a;km}^\dagger B_{a;k\ell}\Pc\rho_{\ell m}\right.}\nonumber\\
&{\left.\hphantom{\sum_{\alpha a k \ell m} } +\rho_{\ell m} \Pc B_{a;km}^\dagger \Et_{a;k\ell}\Pc+\mathcal{D}_{a;k\ell m}(\rho_{\ell m})\right]}.\nonumber
\end{align}
Now, we make use of the fact that $\cE$ is TP, i.e., $\sum_a E_a^\dagger E_a=\id$, which gives, using $E_a=\Et_a+B_a$,
\begin{align}
\label{eq:TP}
&\sum_{\alpha ak} e_{akm,\alpha}^*e_{ak\ell,\alpha}\Pc=\sum_{ak} \Pc \Et_{a;km}^\dagger \Et_{a;k\ell}\Pc\\
=&\delta_{m\ell}\Pc-\!\sum_{ak}\Pc {\left(\Et_{a;km}^\dagger B_{a;k\ell}+B_{a;km}^\dagger \Et_{a;k\ell}\right)}\Pc-\Delta_{m\ell}.\nonumber
\end{align}
To make use of this expression, we write the first term in the equality of Eq.~\eqref{eq:rhoSp} as a sum of two terms: $\sum_\alpha e_{ak\ell,\alpha} e_{akm,\alpha}^*\rho_{\ell m}=\frac{1}{2}\big(\sum_\alpha e_{ak\ell,\alpha} e_{akm,\alpha}^*\big)\rho_{\ell m}+\frac{1}{2}\rho_{\ell m}\big(\sum_\alpha e_{ak\ell,\alpha} e_{akm,\alpha}^*\big)$. Inserting Eq.~\eqref{eq:TP} into Eq.~\eqref{eq:rhoSp} and simplifying, one obtains the expressions for $\rhoS'=\rhoS+\sigma_1+\sigma_2$ in the main text.

\bigskip
\noindent \textbf{Solving Eq.~\eqref{eq:delta} for $\delta_1$ and $\delta_2$.}\\[1ex]
We write the pure system-bath input code state $\rho\equiv\Psi=|\Psi\rangle\langle\Psi|$ using Schmidt decomposition: $\ket{\Psi}=\sum_k \mu_k\ket{\psi_k}\ket{k}$, where the first tensor factor $\ket{\psi_k}\in\cC$ is a system code state, and $\ket{k}\in\HB$ is a bath state. The $k$ index goes from 1 to $\min\{d,\dB\}$, and we choose the phases of $\ket{k}$s (say) such that $\mu_k\geq 0~\forall k$. The set $\{\ket{\psi_k}\}_{k=1}^{d}$ forms an orthonormal basis for $\cC$, where, if $d>\dB$, $\ket{\psi_k}$s for $k=\dB+1,\ldots,d$ are chosen to complete the basis given the Schmidt set $\ket{\psi_k}_{k=1}^{\dB}$. Similarly, $\{\ket{k}\}_{k=1}^{\dB}$ is an orthonormal basis for $\HB$, again with additional vectors chosen to complete the basis if $\dB>d$. We also introduce the notation, $\ket{\Psi_{k\ell}}\equiv\ket{\psi_k}\ket{\ell}$ (a system-bath state) and $Q_{k\ell}\equiv|\psi_k\rangle\langle\psi_\ell|$. Observe that $Q_{k\ell}Q_{mn}=\delta_{\ell,m}Q_{kn}$, $\rho_{k\ell}\equiv \bra{k}\rho\ket{\ell}=\mu_k\mu_\ell Q_{k\ell}$, $\Pc Q_{k\ell}=Q_{k\ell}\Pc=Q_{k\ell}$, $\sum_{kk}Q_{kk}=\Pc$, $\rhoS=\trb\{\Psi\}=\sum_k\mu_k^2Q_{kk}$, and $\sqrt{\rhoS}=\sum_k\mu_kQ_{kk}$.

The right-hand sides of Eq.~\eqref{eq:delta} can be written as
\begin{align}
\sqrt{\rhoS}\sigma_1\sqrt{\rhoS}&=\frac{1}{2} \sum_{k\ell m} \mu_k\mu_\ell^2\mu_m\Big(Q_{\ell m}\Theta_{m\ell}Q_{kk}+\mathrm{h.c.}\Big)\\
\sqrt{\rhoS}\sigma_2\sqrt{\rhoS}&=\sqrt{\rhoS}\sum_{ak\ell m} \mathcal{D}_{a;k\ell m}(\rho_{lm})\sqrt{\rhoS}\nonumber\\
&\hphantom{\equiv\,}-\frac{1}{2} \sum_{k\ell m}  \mu_k\mu_\ell\mu_m^2 \bigr(Q_{kk}\Delta_{m\ell}Q_{\ell m}+\mathrm{h.c.}\bigr).\nonumber
\end{align}
One can check that the following expression for $\delta_1$ solves Eq.~\eqref{eq:delta1} (the order-$BP$ equation):
\begin{equation}\label{eq:d1}
\delta_1=\frac{1}{2}\sum_{k\ell}  \frac{\mu_k^2\mu_\ell}{s_{k\ell}}
\Bigl(Q_{k\ell}\bra{\Psi}\Theta\ket{\Psi_{\ell k}}+\mathrm{h.c.}\Bigr),
\end{equation}
where $s_{k\ell}\equiv \mu_k^2+\mu_\ell^2=s_{\ell k}$. In the above sum (and in the ones below), a ($k$,$\ell$) term is regarded as zero whenever $s_{k\ell}=0$. As $\Theta^\dagger =-\Theta$, it is easy to see that $\tr(\delta_1)=0$. The $\delta_1$ term hence gives vanishing contribution to the fidelity expression, and we need the next-order term $\delta_2$.

To solve Eq.~\eqref{eq:delta2} given our solution for $\delta_1$ above, we split $\delta_2$ into two terms, $\delta_2\equiv V+W$ such that
\begin{align}
\rhoS V+V\rhoS&=-\delta_1^2\nonumber\\
\textrm{and}\quad \rhoS W+W\rhoS&=\sqrt{\rhoS}\sigma_2\sqrt{\rhoS},\label{eq:V}
\end{align}
and hence satisfy Eq.~\eqref{eq:delta2}. Straightforward calculations yield the solution $V=\frac{1}{4}(V_1+V_2+V_3)$ and $W=W_1+W_2$ where
\begin{align}
V_1&\equiv \sum_{k\ell m}\frac{-\mu_k^3\mu_{\ell}^2\mu_m}{s_{k\ell} s_{\ell m}s_{mk}}\Big(\bra{\Psi}\Theta\ket{\Psi_{k\ell}}\bra{\Psi}\Theta\ket{\Psi_{m k}}Q_{\ell m}+\text{h.c}\Big),\nonumber\\
V_2&\equiv \sum_{k\ell m}\frac{\mu_k^2\mu_\ell^2\mu_m^2}{s_{k\ell} s_{\ell m}s_{mk}} \bra{\Psi}\Theta\ket{\Psi_{k\ell}}\bra{\Psi_{km}}\Theta\ket{\Psi}Q_{\ell m},\nonumber\\
V_3&\equiv \sum_{k\ell m}\frac{\mu_k\mu_\ell\mu_m^4}{s_{k\ell} s_{\ell m}s_{mk}}\bra{\Psi_{km}}\Theta\ket{\Psi}\bra{\Psi}\Theta\ket{\Psi_{\ell m}}Q_{k\ell},
\end{align}
and 
\begin{align}\label{eq:W}
W_1&\equiv\frac{1}{2}\sum_{k\ell}\frac{-\mu_k\mu_\ell^2}{s_{k\ell}}\Big(\bra{\Psi_{k\ell}}\Delta\ket{\Psi}Q_{k\ell}+\mathrm{h.c.}\Big),\\
W_2&\equiv\sum_{\alpha ak\ell m} \frac{\mu_\ell\mu_m}{s_{\ell m}}Q_{\ell\ell} F_\alpha^\dagger \bra{k}B_a|\Psi\rangle\langle\Psi| B_a^\dagger\ket{k}F_\alpha Q_{mm}.\nonumber
\end{align}
The sums over the indices of the $\mu$ coefficients are understood to be only over those $k$ values for which $\mu_k\neq 0$.

To compute the fidelity, we need the trace of $\delta_2$. We do this for each $V$ and each $W$. First the $W$'s:
\begin{align}
\tr\{W_1\}&= -\frac{1}{2}\langle\Psi|\Delta|\Psi\rangle=-\frac{1}{2}\langle\Psi|{\Bigl(\sum_aB_a^\dagger B_a\Bigr)}|\Psi\rangle,\\
\tr\{W_2\}&=\frac{1}{2}\langle\Psi|{\Bigl[\sum_aB_a^\dagger {\left(\Pt\otimes\idB\right)}B_a\Bigr]}|\Psi\rangle,
\end{align}
where $\Pt\equiv \sum_{\alpha \ell}F_\alpha Q_{\ell\ell}F_\alpha^\dagger$, with the sum over $\ell$ going over those $\ell$s such that $\mu_\ell\neq 0$. The trace of $W=W_1+W_2$ is then
\begin{align}
\tr\{W\}&=-\frac{1}{2}\langle\Psi|{\Bigl[\sum_aB_a^\dagger {\left(\id-\Pt\otimes\idB\right)}B_a\Bigr]}|\Psi\rangle,\nonumber\\
&\geq-\frac{1}{2}\langle\Psi|{\Bigl(\sum_aB_a^\dagger B_a\Bigr)}|\Psi\rangle
\geq-\frac{1}{2} \Vert\Delta\Vert.  \label{eq69}
\end{align}

For the trace of $V$, we have
\begin{align}
\tr\{V_1\}&\equiv -\frac{1}{2}\sum_{k\ell}\frac{\mu_k^3\mu_{\ell}}{s_{k\ell}^2}\Big(\avg{k\ell}\avg{\ell k}+\mathrm{c.c}\Big),\nonumber\\
\tr\{V_2\}&\equiv -\frac{1}{2}\sum_{k\ell}\frac{\mu_k^2\mu_\ell^2}{s_{k\ell}^2} |\avg{k\ell}|^2,\nonumber\\
\tr\{V_3\}&\equiv -\frac{1}{2}\sum_{k \ell}\frac{\mu_\ell^4}{s_{k \ell}^2}|\avg{k\ell}|^2,
\end{align}
where we have used the shorthand $\avg{k\ell}\equiv \bra{\Psi}\Theta\ket{\Psi_{k\ell}}$.
The traces for $V_2$ and $V_3$ are manifestly negative, but $\tr\{V_1\}$ could be positive. The total trace for $V$, however, is in fact non-positive. To see this, observe that 
\begin{equation}
|\tr\{V_2+V_3\}|
=\frac{1}{4}\sum_{k\ell}\frac{1}{s_{k\ell}}{\left(\mu_\ell^2|\avg{k\ell}|^2+\mu_k^2|\avg{\ell k}|^2\right)},
\end{equation}
while, by triangle inequality, we have 
\begin{align}
|\tr\{V_1\}|&\leq \sum_{k\ell}\frac{\mu_k^3\mu_{\ell}}{s_{k\ell}^2}|\avg{k\ell}|~|\avg{\ell k}|\\
&=\frac{1}{2}\sum_{k\ell}\!\!{\left(\!\frac{\mu_k^3\mu_{\ell}}{s_{k\ell}^2}+\frac{\mu_\ell^3\mu_{k}}{s_{k\ell}^2}\!\right)}|\avg{k\ell}|~|\avg{\ell k}|\nonumber\\
&=\frac{1}{2}\sum_{k\ell}\frac{\mu_k\mu_\ell}{s_{k\ell}}|\avg{k\ell}|~|\avg{\ell k}|.\nonumber
\end{align}
We hence see that
\begin{align}
&|\tr\{V_2+V_3\}|-|\tr\{V_1\}|\nonumber\\
\geq&\frac{1}{4}\sum_{k\ell}\frac{1}{s_{k\ell}}\bigl(\mu_\ell|\avg{k\ell}|-\mu_k|\avg{\ell k}|\bigr)^2\geq0,
\end{align}
and thus conclude that $\tr\{V\}=\frac{1}{4}(\tr\{V_1\}+\tr\{V_2+V_3\})\leq 0$. In addition, we can bound,
\begin{align}
{\left\vert \tr\{V\}\right\vert} &\leq \frac{1}{4}{\left(|\tr\{V_2+V_3\}|+|\tr\{V_1\}|\right)}  \nonumber\\
&\leq  \frac{1}{16} \sum_{k\ell} \frac{1}{s_{k\ell}}\bigl(\mu_\ell|\avg{k\ell}|+\mu_k|\avg{\ell k}|\bigr)^2.
\end{align}

A more useful bound is one that does not depend on the Schmidt coefficients $\mu_k$s or on the choice of bath basis.
For that, first note that $|\avg{k\ell}|\,|\avg{\ell k}|\leq \frac{1}{2}(|\avg{k\ell}|^2+|\avg{\ell k}|^2)$. Then, $\bigl(\mu_\ell|\avg{k\ell}|+\mu_k|\avg{\ell k}|\bigr)^2\leq \mu_\ell^2|\avg{k\ell}|^2+\mu_k^2|\avg{\ell k}|^2+\mu_k\mu_\ell(|\avg{k\ell}|^2+|\avg{\ell k}|^2)$. Writing $x_{k\ell}\equiv \mu_k/\mu_\ell (>0)$, and switching the indices $k\leftrightarrow \ell$ where necessary, we arrive at the expression,
\begin{align}
{\left\vert \tr\{V\}\right\vert}
&\leq  \frac{1}{8} \sum_{k\ell} \frac{1}{s_{k\ell}}\bigl(\mu_\ell^2+\mu_k\mu_\ell)|\avg{k\ell}|^2\nonumber\\
&= \frac{1}{8} \sum_{k\ell} \frac{1+x_{k\ell}}{1+x_{k\ell}^2}|\avg{k\ell}|^2.
\end{align}
The maximum value of $\frac{1+x_{k\ell}}{1+x_{k\ell}^2}$ is $\frac{1}{2}(1+\sqrt 2)$, attained when $x_{k\ell}=\sqrt{2}-1$. Together with the fact that $\sum_{k\ell} \ket{\Psi_{k\ell}}\bra{\Psi_{k\ell}}$ for any code state is a  projector onto a subspace of $\cC\otimes\HB$, we have
\begin{align}
\Big|\tr\{V\}\Big| &\leq \frac{1+\sqrt{2}}{16} \bra{\Psi}\Theta^\dagger \Theta\ket{\Psi}\leq \frac{1+\sqrt{2}}{16} \Vert\Theta^\dagger \Theta\Vert.
\end{align}
Recalling that $\tr\{V\}\leq 0$, we thus have $0\geq \tr\{V\}\geq -\frac{1+\sqrt 2}{16}\Vert\Theta^\dagger\Theta\Vert$. Together with the trace of $W$, we finally have
\begin{align}
0\geq \tr\{\delta_2\}&=\tr\{V\}+\tr\{W\}\nonumber\\
&\geq -{\left[\tfrac{1}{16}(1+\sqrt 2)\Vert\Theta^\dagger\Theta\Vert +\tfrac{1}{2}\Vert\Delta\Vert\right]},
\end{align}
as given in Eq.~\eqref{eq:trdelta2}.\\

\textbf{Positivity of $M^{1/2}$ and alternate expressions for $\delta_1$ and $\delta_2$}

To compute $M^{1/2}$, we introduced in Eq.(\ref{eq25}) the \emph{ansatz} $M^{1/2}=\rhoS+\delta_1+\delta_2$, and provided expressions for $\delta_1$ and $\delta_2$ accurate to quadratic order in $BP$. We did not, however, explicitly ensure that our expressions for $\delta_1$ and $\delta_2$ give a non-negative $M^{1/2}$, as is understood in the definition of the square root of $M$ for the evaluation of the fidelity. Our $M^{1/2}$ expression \emph{is} manifestly Hermitian, but there is no \emph{a priori} guarantee of nonnegative eigenvalues.

Instead, we show nonnegativity of our expression for $M^{1/2}$, \emph{a posteriori}. In particular, we show here that it is equal, up to quadratic order in $BP$, to an operator writable as an exponential, which guarantees nonnegativity,
\begin{equation}\label{eq74}
M^{1/2}=\rhoS+\delta_1+\delta_2=\upe^{A+\dA+\ddA}, 
\end{equation}
where $A, \dA$ and $\ddA$ are Hermitian operators, zeroth, first, and second order, respectively, in $BP$. 

We first expand the RHS of Eq.(\ref{eq74}) in powers of $BP$ using standard perturbation theory. To second order in $BP$, we have (see, for example, the Appendix of Ref.~\cite{Karplus+Schwinger:48}),
\begin{align}
&\upe^{A+\dA+\ddA}\nonumber\\
\simeq&\,\upe^A + \int_0^1 \mathrm{d}\alpha\, \upe^{(1-\alpha)A}\dA\,\upe^{\alpha A} +\int_0^1 \mathrm{d}\alpha\, \upe^{(1-\alpha)A}\ddA\,\upe^{\alpha A} \nonumber\\
&+ \int_0^1 \mathrm{d}\alpha\int_0^\alpha \mathrm{d}\alpha'\, \upe^{(1-\alpha)A}\dA\,\upe^{(\alpha-\alpha') A}\dA\,\upe^{\alpha' A},
\end{align}
and identify 
\begin{align}
\rhoS=&\,\upe^A,\\
\delta_1=&\,\int_0^1 \mathrm{d}\alpha\, \upe^{(1-\alpha)A}\dA\,\upe^{\alpha A}, \label{eq77} \\
\delta_2=&\,\int_0^1 \mathrm{d}\alpha\, \upe^{(1-\alpha)A}\ddA\,\upe^{\alpha A} \nonumber\\
&+\int_0^1 \mathrm{d}\alpha\int_0^\alpha \mathrm{d}\alpha'\, \upe^{(1-\alpha)A}\dA\,\upe^{(\alpha-\alpha') A}\dA\,\upe^{\alpha' A}. \label{eq78} 
\end{align}
The operators $\dA$ and $\ddA$ are in turn identified from the consistency requirement, up to 2nd order in the uncorrectable part, 
\begin{align}
M=\rhoS^2+\rhoS^{1/2} \sigma_1 \rhoS^{1/2}+\rhoS^{1/2} \sigma_2 \rhoS^{1/2}=\upe^{2(A+\dA+\ddA)}, \label{eq79}
\end{align}
that is, we identify 
\begin{align}
\rhoS^2=&\,\upe^{2A},\quad \textrm{i.e.,}\quad \rhoS=\upe^{A},\\
\rhoS^{1/2} \sigma_1 \rhoS^{1/2}=&\,2\int_0^1 \mathrm{d}\alpha\, \upe^{(1-\alpha)2A}\dA\,\upe^{2\alpha A},  \label{eq81} \\
\rhoS^{1/2} \sigma_2 \rhoS^{1/2}=&\,2\int_0^1 \mathrm{d}\alpha\, \upe^{(1-\alpha)2A}\ddA\,\upe^{2\alpha A}\label{eq82}\\ 
&\hspace*{-1cm}+4\int_0^1 \!\!\mathrm{d}\alpha\!\int_0^\alpha \!\!\!\mathrm{d}\alpha'\, \upe^{(1-\alpha)2A}\dA\,\upe^{(\alpha-\alpha') 2A}\dA\,\upe^{2\alpha' A}. \nonumber 
\end{align}
From before, $\upe^A=\rhoS=\sum_k\mu_k^2Q_{kk}$, with the sum over $k$ taken for those with $\mu_k>0$ only. The left-hand side of Eq.~\eqref{eq81} is then 
\begin{align}
\sum_{k\ell} \mu_k\mu_\ell (\sigma_1)_{k\ell}Q_{k\ell} ,\label{eq83}
\end{align}
where we make use of the shorthand $(\cdot)_{k\ell}\equiv\bra{\psi_k}(\cdot)\ket{\psi_\ell}$, and the right-hand side is
\begin{align}
&2\sum_{k\ell} Q_{k\ell} \int_0^1\upd\alpha\, \mu_k^4 {\left(\frac{\mu_\ell^4}{\mu_k^4}\right)}^{\alpha}(\dA)_{k\ell}\nonumber\\
=&\frac{1}{2}\! \sum_{k\ell} \frac{(x_{\ell k}^4-1)}{\ln (x_{\ell k})}\mu_k^4 Q_{k\ell}  (\dA)_{k\ell}, \label{eq84}
\end{align}
recalling our earlier notation: $x_{k\ell}\equiv \mu_k/\mu_\ell$. For the factor $\frac{(x_{\ell k}^4-1)}{\ln (x_{\ell k})}$ above, the limiting value of 4 is understood when $x_{\ell k}=1$.
From Eqs. \eqref{eq83} and \eqref{eq84}, we identify the matrix elements of $\dA$,
\begin{equation}
(\dA)_{k\ell}=2\frac{\mu_\ell}{\mu_k^3}\frac{\ln(x_{\ell k})}{(x_{\ell k}^4-1)} (\sigma_1)_{k\ell}. \label{eq86}
\end{equation}
Since $\sigma_1$ is Hermitian, so is $\dA$. In addition, the factor between their matrix elements is invariant under the exchange $k\leftrightarrow \ell$. A similar calculation for Eq.~\eqref{eq77} yields 
\begin{align}
(\delta_1)_{k\ell}&=\frac{1}{2}\mu_k^2\frac{x_{\ell k}^2-1}{\ln(x_{\ell k})} (\delta A)_{k\ell}, \label{eq87}
\end{align}
where, again, the limit (of 2 this time) is understood when $x_{\ell k}\rightarrow 1$.
Combining the last two equations gives
\begin{align}
(\delta_1)_{k\ell}&=\frac{\mu_k\mu_\ell}{s_{k\ell}} (\sigma_1)_{k\ell}, \label{eq88}
\end{align}
(recall, $s_{k\ell}=\mu_k^2+\mu_\ell^2$) a relationship satisfied by our earlier expressions of $\delta_1$ [Eq.~\eqref{eq:d1}] and $\sigma_1$ [Eq.~\eqref{eq23}]. Note that the traceless nature of $\delta_1$, which we showed earlier, is particularly clear from this expression: $\sigma_1$ is a commutator [see Eq.~\eqref{eq23}], and hence has zero trace.

The second-order terms can be worked out similarly. We just provide the final expressions here: Eq.~\eqref{eq82} gives
\begin{align}
\,&\frac{1}{2}(\ddA)_{k\ell} \mu_k^4\frac{x_{\ell k}^4-1}{\ln(x_{\ell k})} \label{eq89}\\
=&\,\mu_k\mu_\ell (\sigma_2)_{k\ell} \nonumber\\
&-\frac{1}{4} \sum_m (\dA)_{km}(\dA)_{m\ell} \frac{\mu_k^4}{\ln(x_{\ell m})}\bigg(\frac{x_{\ell k}^4-1}{\ln(x_{\ell k})}-\frac{x_{mk}^4-1}{\ln(x_{mk})}\bigg), \nonumber
\end{align}
while Eq.~\eqref{eq78} yields 
\begin{align}
(\delta_2)_{k\ell}
=&\frac{1}{2}(\ddA)_{k\ell} \mu_k^2\frac{x_{\ell k}^2-1}{\ln(x_{\ell k})} \label{eq90}\\
&\hspace*{-0.5cm}+\frac{1}{4} \sum_m (\dA)_{km}(\dA)_{m\ell}\frac{ \mu_k^2}{\ln(x_{\ell m})}\bigg(\frac{x_{\ell k}^2-1}{\ln(x_{\ell k})}-\frac{x_{m k}^2-1}{\ln(x_{mk})}\bigg). \nonumber
\end{align}
$\ddA$ in Eq.~\eqref{eq89} can be shown to be Hermitian. Combining Eqs.~\eqref{eq86}, \eqref{eq89} and \eqref{eq90}, we get
\begin{align}\label{eq91}
(\delta_2)_{k\ell} =&\frac{\mu_k\mu_\ell}{s_{k\ell}} (\sigma_2)_{k\ell} -\frac{\mu_k\mu_\ell}{s_{k\ell}} \sum_m \frac{\mu_m^2}{s_{km}s_{\ell m}}(\sigma_1)_{km}(\sigma_1)_{m\ell},
\end{align}
indeed satisfied by our expressions for $\delta_2$ [Eqs.~\eqref{eq:V} and \eqref{eq:W}], $\sigma_1$, and $\sigma_2$ [Eq.~\eqref{eq23}].
In fact, from our analysis here, we can identify
\begin{align}\label{eq:VW}
(V)_{k\ell}&=-\frac{\mu_k\mu_\ell}{s_{k\ell}} \sum_m \frac{\mu_m^2}{s_{km}s_{\ell m}}(\sigma_1)_{km}(\sigma_1)_{m\ell},\nonumber\\
\textrm{and }\quad
(W)_{k\ell}&=\frac{\mu_k\mu_\ell}{s_{k\ell}} (\sigma_2)_{k\ell}.
\end{align}

In summary, we have shown that our \emph{ansatz} for $M^{1/2}$ in Eq.~\eqref{eq25} with our earlier expressions for $\delta_1$ and $\delta_2$ gives the correct (i.e., nonnegative) square root of $M$. In addition, we found alternate expressions that relate $\delta_1$ and $\delta_2$ to $\sigma_1$ and $\sigma_2$ [Eqs.~\eqref{eq88} and \eqref{eq:VW}].

\bibliographystyle{aps}

\end{document}